\newtheorem{theorem}{Theorem}[section]
\newtheorem{lemma}[theorem]{Lemma}
\newcommand{\GF}{{\mathrm {GF}}}
\begin{document}

\begin{frontmatter}



\title{Two Classes of Power Mappings with Boomerang Uniformity 2}

\author[SWJTU]{Zhen Li}
 \ead{lz-math@my.swjtu.edu.cn}
\author[SWJTU]{Haode Yan\corref{cor1}}
 \ead{hdyan@swjtu.edu.cn}

 \cortext[cor1]{Corresponding author}
 \address[SWJTU]{School of Mathematics, Southwest Jiaotong University, Chengdu, 610031, China}


\begin{abstract}
Let $q$ be an odd prime power. Let $F_1(x)=x^{d_1}$ and $F_2(x)=x^{d_2}$ be power mappings over $\GF(q^2)$, where $d_1=q-1$ and $d_2=d_1+\frac{q^2-1}{2}=\frac{(q-1)(q+3)}{2}$.
In this paper, we study the the boomerang uniformity of $F_1$ and $F_2$ via their differential properties. It is shown that, the boomerang uniformity of $F_i$ ($i=1,2$) is 2 with some conditions on $q$.
\end{abstract}

\begin{keyword}

Power mapping \sep Differential uniformity \sep Boomerang uniformity
\MSC  94A60 \sep 11T06

\end{keyword}

\end{frontmatter}

\section{Introduction}
Differential cryptanalysis \cite{BS1990} is one of the most fundamental analysis method to attack block cipher, which has attracted extensive researches for 30 years. In order to measure the ability of a given function to resist differential attack, Nyberg introduced the concept of differential uniformity in \cite{N1993}. Let $\GF(q)$ denote the finite field containing $q$ elements and $\GF(q)^*=\GF(q)\setminus\{0\}$. For a function $F$ from $\GF(q)$ to itself, the differential uniformity of $F$ is
$$
\Delta_F=\max_{a\in\GF(q)^*}\max_{b\in\GF(q)}\delta_F(a,b),
$$
where $\delta_F(a,b)=\#\{x\in\GF(q)~|~F(x+a)-F(x)=b\}$.
The lower the quantity of $\Delta_F$ is, the stronger the ability of function $F$ resisting differential attack. If $\Delta_F=1$, then $F$ is called perfect nonlinear (PN). Known results on PN functions were presented in \cite{BH2008,CM1997,DY2006,ZKW2009}. If $\Delta_F=2$, then $F$ is called almost perfect nonlinear (APN). For the known results on APN functions, the readers are referred to \cite{BDMW2010,BH2008,Dobbertin-welchcase,Dobbertin-nihocase,DMMPW2003,HRS1999,HS1997,NH2007,N1993,ZW2011}.

Power mappings (i.e., monomials) with low differential uniformity serve as
good candidates for the design of S-boxes not only because of
their strong resistance to differential attacks but also for the
usually low implementation cost in hardware. For any power mappings $F(x)=x^d$, we have $\delta_F(a,b)=\delta_F(1,\frac b{a^d})$ for arbitrary $(a,b)\in\GF(q)^*\times\GF(q)$. Hence the differential properties of the power function $F$ are completely determined by $\delta_F(1,b)$ when $b$ varies through $\GF(q)$. In \cite{BCC2010}, Blondeau, Canteaut and Charpin defined the differential spectrum of a power function. Let $\omega_i=\#\{b\in\GF(q)~|~\delta_F(1,b)=i\}$ for $0\leq i\leq\Delta_F$, where $\Delta_F$ is the differential uniformity of $F$. The differential spectrum of $F$ is defined as the multiset
$$
\mathbb S=\{\omega_i~|~0\leq i\leq\Delta_F, \omega_i>0\}.
$$
 The differential spectrum is an important concept of cryptographic functions. The differential spectrum of any PN function is $\mathbb S=\{\omega_1=q\}$, while the differential spectrum of any APN function over even characteristic finite field $\GF(2^n)$ is $\mathbb S=\{\omega_0=\omega_2=2^{n-1}\}$. For the results of power mappings with known differential spectra, the readers are referred to \cite{BCC2010,BCC2011,CHNC2013,Dobbertin2001,L.2020,LY2021,Xia2020,XY2017,XYY2018,YL2021,YXLHXL,YZWWHW2019} and their references for more information.
 In 1999, Wagner introduced a new cryptanalysis method against block ciphers, namely, the boomerang attack \cite{W1999}. It can be regard as a generalization of the differential attack, and it allows new avenues of attack for many ciphers previously deemed safe from differential cryptanalysis. Nowadays, the boomerang distinguisher is one of the most powerful distinguishers for identifying weaknesses in block ciphers. The boomerang distinguisher has been recently deemed another highly important distinguisher of block ciphers. In order to study this attack method in detailed, Cid \textit{et al.} introduced the Boomerang Connectivity Table (BCT) in EUROCRYPT 2018 \cite{CTTYL2018}. To quantify the resistance of a function against the boomerang attack, Boura and Canteaut introduced the concept of boomerang uniformity \cite{BC2018}. Later in \cite{LQSL2019}, Li, Qu, Sun and Li generalized the definition of the boomerang uniformity. Let $F$ be a function defined on $\GF(p^n)$. For any $a,b\in\GF(p^n)^*$, denote by $\beta_F(a,b)$ the number of the solutions $(x,y)$ in $\GF(p^n)\times \GF(p^n)$ of the equation system
\begin{align*}
    \left\{
      \begin{array}{ll}
   F(x)-F(y) & =b, \\
   F(x+a)-F(y+a) & =b.
      \end{array}
    \right.
\end{align*}
The boomerang uniformity of $F$ is defined as
$$
\beta_F=\max_{a,b\in \GF(p^n)^*}\{\beta_F(a,b)\}.
$$
The lower the quantity of $\beta_F$ is, the stronger the ability of function $F$ resisting boomerang attack. Recent progress on cryptographic functions with known boomerang uniformity can refer to the survey written by Mesnager, Mandal and Msahli \cite{MMM}.



The boomerang uniformity of power mappings attracts lots of attention. When $F(x)=x^d$ is a power function, we have $\beta_F=\max\limits_{b\in \GF(p^n)^*}\{\beta_F(1,b)\}$ similarly. The known results are introduced as follows. For power functions over even characteristic finite fields $\GF(2^n)$, the boomerang uniformity of the inverse function $x^{2^n-2}$ and the Gold function $x^{2^k+1}$ were presented in \cite{BC2018}. Zha and Hu studied the boomerang uniformity of power permutations $x^{2^k-1}$ in \cite{ZH2019}.
Calderini and Villa studied the Bracken-Leader function $x^{2^{2k}+2^k+1}$, where $n=4k$ and $k$ is odd. They showed that the boomerang uniformity of the Bracken-Leader function is upper bounded by 24 \cite{CV2020}. Recently, Hasan, Pal and Stanica studied the power mapping $x^{q-1}$ over $\GF(q^2)$, where $q=2^m$ is a power of $2$. It was shown that the boomerang uniformity is $2$ if $m$ is even, and is $4$ if $m$ is odd \cite{HPS2105.04284}. For odd characteristic finite fields, there are few results. In \cite{JLLQ2021}, Jiang \textit{et al.} determined the boomerang uniformity of $x^{p^n-2}$ over $\GF(p^n)$ and $x^{\frac{3^n+3}2}$ over $\GF(3^n)$ with odd $n$.

We focus on two power mappings over odd characteristic finite fields. Let $F_1(x)=x^{q-1}$ and $F_2(x)=x^{\frac{(q-1)(q+3)}{2}}$ be power mappings over $\GF(q^2)$. In this paper, we present that the boomerang uniformity of $F_i$ ($i=1,2$) is $2$. The rest of this paper is organized as follows. In Section \ref{sec2}, we introduce some notation and the differential properties of $F_1$ and $F_2$. In section \ref{sec3}, we investigate the boomerang uniformity of $F_1$ and $F_2$ via their differential properties. Section \ref{sec4} concludes this paper.

\section{Preliminaries}\label{sec2}
We begin this section by fixing some notation which will be used throughout this paper unless otherwise stated.
\begin{itemize}
	\item $q$ is an odd prime power.
	\item Let $F_1(x)=x^{d_1}$ and $F_2(x)=x^{d_2}$ be power mappings over $\GF(q^2)$, where $d_1=q-1$ and $d_2=d_1+\frac{q^2-1}{2}=\frac{(q-1)(q+3)}2$.
	\item $\Delta_i(x)=F_i(x+1)-F_i(x)=(x+1)^{d_i}-x^{d_i}$, where $i=1,2$.
	\item For any $b \in \GF(q^2)$, let  $\Delta^{-1}_i(b)=\{x~|~\Delta_i(x)=b\}$ and $\delta_i(b)=\#\Delta^{-1}_i(b)$, where $i=1,2$.
\end{itemize}
In the following, we give the differential properties of $F_1$ and $F_2$ with some conditions on $q$, which will be used in the sequel.

\begin{lemma}\label{sec2-lm1}Let $q\not\equiv2\pmod3$. We have,
  \begin{itemize}
    \item $\Delta^{-1}_1(0)=\GF(q)\setminus\{0,-1\}$, $\delta_1(0)=q-2$;
    \item $\Delta^{-1}_1(1)=\{0\}, \Delta^{-1}_1(-1)=\{-1\}, \delta_1(1)=\delta_1(-1)=1$;
    \item $\delta_1(b)\leq2$ for all $b\neq0,\pm1$.
  \end{itemize}
\end{lemma}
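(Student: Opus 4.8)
The plan is to reduce $\Delta_1$ to a single rational expression and then read off all four assertions from it. Throughout I would write $U=\{u\in\GF(q^2)^*:u^{q+1}=1\}$ for the (cyclic, order $q+1$) group of norm-$1$ elements, and use that $x^{q-1}\in U$ for $x\neq0$ and that $u^q=u^{-1}$ for $u\in U$.

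First I would simplify $\Delta_1$ on $\GF(q^2)\setminus\{0,-1\}$. Since the Frobenius $x\mapsto x^q$ is additive, $(x+1)^q=x^q+1$, so for $x\neq0,-1$ one has $(x+1)^{q-1}=\frac{x^q+1}{x+1}$ and $x^{q-1}=\frac{x^q}{x}$. Combining the two fractions gives
\[ \Delta_1(x)=\frac{x^q+1}{x+1}-\frac{x^q}{x}=\frac{x-x^q}{x^2+x}. \]
Direct evaluation at the two excluded points yields $\Delta_1(0)=1$ and $\Delta_1(-1)=-1$. The case $b=0$ is then immediate: for $x\neq0,-1$ the displayed formula vanishes iff $x=x^q$, i.e. iff $x\in\GF(q)$, so $\Delta_1^{-1}(0)=\GF(q)\setminus\{0,-1\}$ and $\delta_1(0)=q-2$.

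Next I would prove $\delta_1(b)\le2$ for every $b\neq0$. Fix such a $b$ and a solution $x\neq0,-1$, and set $v=x^{q-1}=x^q/x\in U$ and $u=(x+1)^{q-1}\in U$, so that $b=u-v$. Raising $b=u-v$ to the $q$-th power and using $u^q=u^{-1}$, $v^q=v^{-1}$ gives $b^q=u^{-1}-v^{-1}=-b/(uv)$, whence $uv=-b^{1-q}$; combined with $u=v+b$ this shows that $v$ is a root of the quadratic $\rho(z)=z^2+bz+b^{1-q}$. On the other hand $x^q=vx$ together with $x-x^q=b(x^2+x)$ gives, after dividing by $x$, that $x=(1-b-v)/b$, so $x$ is uniquely recovered from $v$. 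Hence $x\mapsto v$ injects the solution set into the at most two roots of $\rho$, giving $\delta_1(b)\le2$; this establishes the last bullet.

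Finally I would pin down $b=\pm1$. As $q$ is odd, $(-1)^{1-q}=1$, so $\rho(z)=z^2+z+1$ when $b=1$ and $\rho(z)=z^2-z+1$ when $b=-1$; the roots are the primitive cube, resp. sixth, roots of unity. Here the hypothesis $q\not\equiv2\pmod{3}$ is used to show that every root of $\rho$ lies in $\GF(q)$: if $3\mid q$, then in characteristic $3$ these quadratics are $(z-1)^2$ and $(z+1)^2$ with roots $\pm1\in\GF(q)$; if $q\equiv1\pmod{3}$, then $6\mid q-1$ (as $q$ is odd), so $\GF(q)^*$ already contains all cube and sixth roots of unity. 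Consequently the recovered $x=(1-b-v)/b$ lies in $\GF(q)$, and then $x\neq0,-1$ would force $\Delta_1(x)=0\neq\pm1$ by the first paragraph — a contradiction. Thus no solution with $x\neq0,-1$ exists, and the only solutions are the excluded points giving $\Delta_1(0)=1$ and $\Delta_1(-1)=-1$; hence $\Delta_1^{-1}(1)=\{0\}$, $\Delta_1^{-1}(-1)=\{-1\}$, and $\delta_1(1)=\delta_1(-1)=1$.

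The delicate point is this last step. The bound $\delta_1(b)\le2$ leaves open whether both roots of $\rho$ yield genuine preimages, and for $b=\pm1$ a second preimage appears precisely when a root of $\rho$ of multiplicative order $3$ or $6$ lands in $U$, i.e. when $3\mid q+1$ — which is exactly the excluded congruence $q\equiv2\pmod{3}$. One must also dispose of the characteristic-$3$ degeneration, where the roots collapse to $\pm1\in U$ but the recovered $x$ is an excluded point ($-1$ or $0$) and so contributes nothing new. Confirming that these are the only ways a spurious root can arise is the crux of the argument.
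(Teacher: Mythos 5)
Your proof is correct, and it takes a partly different route from the paper's, so a comparison is in order. For the bound $\delta_1(b)\le 2$ both arguments run the same elimination (multiply by $x(x+1)$, apply Frobenius), but the paper lands on a quadratic in $x$ itself, namely $x^2+(1-\frac{2}{b})x+\frac{b^{q-1}+1-b^{q}}{b^{q+1}}=0$, whereas you pass to the norm-one variable $v=x^{q-1}\in U$, obtain $v^2+bv+b^{1-q}=0$, and recover $x=(1-b-v)/b$ linearly; the content is equivalent, but the paper's normalization is load-bearing downstream, since the explicit root relations $z_1+z_2=-1+\frac{2}{c}$ and $z_1z_2=\frac{c^{q-1}+1-c^{q}}{c^{q+1}}$ are reused verbatim in the proof of Theorem \ref{sec2-thm1}. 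The genuinely different step is $b=\pm1$: the paper reduces $b=1$ to $x^{q-2}=-1$ and disposes of it via $\gcd(2(q-2),q^2-1)=2$, then gets $b=-1$ for free from the symmetry $x\mapsto -x-1$ (which yields $\delta_1(b)=\delta_1(-b)$, also reused later); you instead identify the candidate $v$'s as primitive cube (resp.\ sixth) roots of unity and show that $q\not\equiv2\pmod3$ forces them into $\GF(q)$, whence $x\in\GF(q)$ and $\Delta_1(x)=0\neq\pm1$. Your version is arguably more illuminating: it exhibits the hypothesis as exactly the condition that the order-$3$ and order-$6$ elements avoid the norm-one subgroup $U$ (equivalently $3\nmid q+1$), and your closing observation is accurate — when $q\equiv2\pmod3$ one checks that $x=-v$, with $v\in U$ a primitive cube root of unity, really does satisfy $\Delta_1(x)=1$, so the exclusion is sharp. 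Two minor points: your interim claim that $\delta_1(b)\le2$ holds for every $b\neq0$ is literally established only for solutions outside $\{0,-1\}$, which is harmless since $0$ and $-1$ are preimages only of $\pm1$ and the last bullet excludes those values of $b$; and your separate characteristic-$3$ discussion is already subsumed by the main rationality argument, since the collapsed roots $\pm1$ lie in $\GF(q)$ and trigger the same contradiction (the recovered $x$ being the excluded point $-1$ or $0$).
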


\begin{proof}

We mainly study the derivative equation
\begin{equation}\label{sec2-1}
  \Delta_1(x)=(x+1)^{q-1}-x^{q-1}=b
\end{equation}
for arbitrary but fixed $b\in\GF(q^2)$.
Clearly, $\Delta_1(0)=1$ and $\Delta_1(-1)=-1$. We assume that $b=0$ first, then $x\neq0,-1$. By (\ref{sec2-1}), we obtain
\begin{equation*}\label{sec2-2}
  (1+\frac1x)^{q-1}=1,
\end{equation*}
then $x\in\GF(q)$.
{Moreover, any $ x\in\GF(q)\setminus\{0,-1\}$ gives $\Delta_1(x)=0$.}
We conclude that $\Delta^{-1}_1(0)=\GF(q)\setminus\{0,-1\}$ and  $\delta_1(0)=q-2$.

Next we assume that $b\neq0$. Multiplying
  $x(x+1)$ on each side of (\ref{sec2-1}) gives
  \begin{equation}\label{sec2-2'}
    x-bx(x+1)=x^{q}.
  \end{equation}
Then $x^q-b^qx^q(x^q+1)=x$. This with (\ref{sec2-2'}) leads to
  \begin{equation}\label{sec2-2''}
    x^2+(1-\frac{2}{b})x+\frac{b^{q-1}+1-b^{q}}{b^{q+1}}=0,
  \end{equation}
  which is a quadratic equation of $x$. Therefore, $\delta(b)\leq2$ for all $b\neq0,\pm1$.

When $b=1$, it is obvious that $x=0$ is a solution of (\ref{sec2-1}). When $x\neq0,-1$,  the equation (\ref{sec2-2'}) becomes
\begin{equation*}\label{sec2-3}
  x^{q-2}=-1.
\end{equation*}
Note that $\gcd(2(q-2),q^2-1)=2$ since $q\not\equiv2\pmod3$. We obtain $x=-1$, which is not considered. Then $\Delta^{-1}(1)=\{0\}$ and $\delta_1(1)=1$.

Since $q-1$ is even, $x\in\Delta^{-1}_1(b)$ if and only if $-x-1\in\Delta^{-1}_1(-b)$, then $\delta_1(b)=\delta_1(-b)$ for any $b\in\GF(q^2)$. The desired results follows.
\end{proof}

\begin{lemma}\label{sec3-lm1}
	Let $q\not\equiv2\pmod3$ and $q\equiv3\pmod4$. We have
	\begin{itemize}
		
			\item $\Delta^{-1}_2(0)=\GF(q)\setminus\{0,-1\}$, $\delta_2(0)=q-2$;
			\item $\Delta^{-1}_2(1)=\{0\}, \Delta^{-1}_2(-1)=\{-1\}, \delta_2(1)=\delta_2(-1)=1$;
			\item $\delta_2(b)\leq2$ for all $b\neq0,\pm1$.
	\end{itemize}
\end{lemma}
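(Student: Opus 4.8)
The plan is to reduce the entire differential analysis of $F_2$ to that of $F_1$ already settled in Lemma \ref{sec2-lm1}, exploiting the factorization coming from $d_2=d_1+\frac{q^2-1}2$. For $x\in\GF(q^2)^*$ write $\eta(x)=x^{(q^2-1)/2}\in\{\pm1\}$ for the quadratic character and $\bar b=b^q$ for the Frobenius conjugate. First I would record the structural facts: since $q-1$ divides $\frac{q^2-1}2=(q-1)\cdot\frac{q+1}2$, every element of $\GF(q)^*$ is a square, so $\eta\equiv1$ on $\GF(q)^*$ and therefore $F_2(x)=\eta(x)x^{q-1}=\eta(x)F_1(x)$ for $x\neq0$, with $F_2\equiv F_1$ on $\GF(q)$. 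Because $(q-1)(q+3)\equiv0\pmod4$, the exponent $d_2$ is even, whence $F_2(-x)=F_2(x)$ and $\Delta_2(-x-1)=-\Delta_2(x)$; this yields the symmetry $\delta_2(b)=\delta_2(-b)$ exactly as for $F_1$. In particular $\Delta_2(0)=1$, $\Delta_2(-1)=-1$, and $\Delta_2(x)=\Delta_1(x)=0$ for $x\in\GF(q)\setminus\{0,-1\}$ follow at once.

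For the main count I would fix $b\neq0$ and a solution $x\neq0,-1$, and set $v=x^{q-1}$, $u=(x+1)^{q-1}$. Both lie in the norm-one subgroup $U$ of order $q+1$, and $\eta(x)=v^{(q+1)/2}$, $\eta(x+1)=u^{(q+1)/2}$. Writing $\epsilon_1=\eta(x)$, $\epsilon_2=\eta(x+1)$, the equation $\Delta_2(x)=b$ reads $\epsilon_2u-\epsilon_1v=b$; applying Frobenius (so $u\mapsto u^{-1}$, $v\mapsto v^{-1}$, $b\mapsto\bar b$, using $u^q=u^{-1}$, $v^q=v^{-1}$) gives $\epsilon_2u^{-1}-\epsilon_1v^{-1}=\bar b$. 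Eliminating $u$ between these two relations I expect to obtain the quadratic $Q_{\epsilon_1}(v):=v^2+\epsilon_1bv+b/\bar b=0$, while $(x+1)^q=x^q+1$ together with $x^{q-1}=v$ recovers $x=(u-1)/(v-u)$ uniquely from $(u,v)$. Thus every solution is pinned down by the data $(\epsilon_1,v)$.

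The crux, and the point where $q\equiv3\pmod4$ is indispensable, is the sign bookkeeping. Since $(q+1)/2$ is even, $u^{(q+1)/2}=(b+\epsilon_1v)^{(q+1)/2}$ is independent of $\epsilon_2$, so the identity $\epsilon_2=u^{(q+1)/2}$ forces $\epsilon_2$ to be \emph{determined} by $(\epsilon_1,v)$ rather than free (and requires $b+\epsilon_1v\in U$). Moreover $Q_{-}(v)=Q_{+}(-v)$, so the roots of $Q_{-}$ are the negatives of those of $Q_{+}$; and because $(-1)^{(q+1)/2}=1$, a root $r$ of $Q_{+}$ (admissible only if $r^{(q+1)/2}=+1$) and the matching root $-r$ of $Q_{-}$ (admissible only if $r^{(q+1)/2}=-1$) can never both qualify. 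Pairing the two roots of $Q_{+}$ with the two of $Q_{-}$ this way leaves at most two admissible values, giving $\delta_2(b)\le2$. I expect this pairing argument to be the main obstacle, since it demands checking that $U$-membership and the two character constraints are simultaneously consistent.

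Finally I would treat the special values. For $b=0$ the equation forces $u=\pm v$; the case $u=-v$ would give $\eta(x+1)=(-v)^{(q+1)/2}=v^{(q+1)/2}=\eta(x)$, contradicting the mixed-sign requirement, so only $u=v$ survives, which is $\Delta_1(x)=0$ and yields $\Delta_2^{-1}(0)=\GF(q)\setminus\{0,-1\}$ by Lemma \ref{sec2-lm1}. For $b=1$ the quadratics become $v^2\pm v+1=0$, whose roots are primitive cube (respectively sixth) roots of unity; the hypothesis $q\not\equiv2\pmod3$ guarantees these do not lie in $U$, while in characteristic $3$ the repeated root forces $x=-1$ or violates the character constraint, so no solution with $x\neq0,-1$ occurs and $\Delta_2^{-1}(1)=\{0\}$. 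The case $b=-1$ then follows from $\delta_2(b)=\delta_2(-b)$, giving $\Delta_2^{-1}(-1)=\{-1\}$.
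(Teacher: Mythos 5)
Your proposal is correct, but it takes a genuinely different route from the paper's proof. The paper keeps the variable $x$ and splits into two cases according to whether $\chi(x+1)=\chi(x)$ or $\chi(x+1)=-\chi(x)$: the first case reduces to $(x+1)^{q-1}-x^{q-1}=b\chi(x)$ and hence to Lemma \ref{sec2-lm1}, the second yields a fresh quadratic in $x$ after multiplying by $x(x+1)$; mutual exclusivity comes from showing the cases force $\chi(b)=1$ and $\chi(b)=-1$ respectively, and within each case a sign argument (the $x\mapsto -x-1$, $b\mapsto -b$ correspondence in Case I; in Case II, a square discriminant combined with a nonsquare product of roots) caps each case at two solutions. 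You instead transfer everything to the norm-one subgroup $U$ of order $q+1$ via $v=x^{q-1}$, $u=(x+1)^{q-1}$, eliminate $u$ by Frobenius conjugation ($u^q=u^{-1}$, $v^q=v^{-1}$) to get the single quadratic family $v^2+\epsilon_1bv+b^{1-q}=0$, and replace both the $\chi(b)$-exclusivity step and the in-case sign arguments by the pairing $Q_-(v)=Q_+(-v)$ together with $(-1)^{(q+1)/2}=1$ --- which is exactly where $q\equiv3\pmod4$ enters, playing the same role as in the paper's character computations. I verified the key steps: the elimination does produce your quadratic; $v=-\epsilon_1b$ (i.e.\ $u=0$) is never a root since $Q_{\epsilon_1}(-\epsilon_1b)=b^{1-q}\neq0$; for $b\neq0$ the recovery $x=(1-u)/(u-v)$ is well defined because $u=v$ forces $v=1$ and $b=0$; and your special-value analysis (including the characteristic-$3$ degenerate roots of $v^2\pm v+1$, which lead to the excluded $x=-1$ or violate $v^{(q+1)/2}=\epsilon_1$) is sound, as is the symmetry $\delta_2(b)=\delta_2(-b)$ from $d_2$ even. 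Your route is arguably cleaner --- one quadratic family, no discriminant analysis, and the determination of $\epsilon_2$ by $(\epsilon_1,v)$ removes a case split --- but the paper's version has a downstream payoff you would forfeit: its explicit quadratics in $x$ (equations (\ref{sec2-2''}) and (\ref{sec3-eq3})) and the resulting formulas for the two solutions $z_1,z_2$ are reused verbatim in the proof of Theorem \ref{sec3-thm} to derive the relation between $b$ and $c$, so following your parametrization one would have to re-express those facts in terms of $v$.
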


\begin{proof}
We mainly study the derivative equation
\begin{equation}\label{sec3-eq1}
	\Delta_2(x)=(x+1)^{d_2}-x^{d_2}=b
\end{equation}
for arbitrary but fixed $b\in\GF(q^2)$, where $d_2=\frac{(q-1)(q+3)}2$ is even. Clearly, $\Delta_2(0)=1$ and $\Delta_2(-1)=-1$. When $b=0$, 	 $x\neq0,-1$. By (\ref{sec3-eq1}), we obtain
\begin{equation*}
	(1+\frac1x)^{d_2}=1,
\end{equation*}
then $x\in\GF(q)$ since $\gcd(d_2,q^2-1)=q-1$. We conclude that $\Delta^{-1}_2(0)=\GF(q)\setminus\{0,-1\}$ and  $\delta_2(0)=q-2$.

Next we assume that $b\neq0$. Let $\chi(x)=x^{\frac{q^2-1}2}$ be the quadratic multiplicative character of $\GF(q^2)^*$. For $x\neq0,-1$, we distinguish the following two cases.

{\bf Case I.} $\chi(x+1)=\chi(x)$. Then (\ref{sec3-eq1}) becomes
\begin{equation}\label{case1}
	(x+1)^{q-1}-x^{q-1}=b\chi(x).
\end{equation}
Moreover,
\begin{equation}\label{chib1}
	b\chi(x)x(x+1)=x-x^q.
\end{equation}
Note that $(x-x^q)^q=-(x-x^q)$, $(x-x^q)^{q-1}=-1$, then
$\chi(x-x^q)=(x-x^q)^{\frac{(q-1)(q+1)}{2}}=1$ since $q\equiv3\pmod4$. From (\ref{chib1}) we conclude that if this case contributes solutions, then $\chi(b)=1$.

By Lemma \ref{sec2-lm1}, (\ref{case1}) has at most two solutions for each value of $\chi(x)$.
Note that $x=x_0$ is a solution of $(x+1)^{q-1}-x^{q-1}=b$ if and only if $x=-x_0-1$ is a solution of $(x+1)^{q-1}-x^{q-1}=-b$. However,  $\chi(-x_0-1)=\chi(x_0+1)=\chi(x_0)$. This implies that (\ref{case1}) cannot have solution for both two values of $\chi(x)$ simultaneously. This case contributes at most two solutions.


{\bf Case II.} $\chi(x+1)=-\chi(x)$. Then (\ref{sec3-eq1}) becomes
		\begin{equation}\label{case2}
 		(x+1)^{q-1}+x^{q-1}=-b\chi(x).
		\end{equation}
		Multiplying $x(x+1)$ on both sides of (\ref{case2}), we have
		\begin{equation}\label{sec3-eq2}
			2x^{q+1}+x^{q}+x=-b\chi(x)x(x+1).
		\end{equation}
		Note that $-b\chi(x)x(x+1)\in\GF(q)$ and $\chi(x(x+1))=-1$, we conclude that, if this case contributes solutions, then $\chi(b)=-1$. Moreover, we obtain
		\begin{equation}\label{case2xb}
			x^{q}=\frac{-b\chi(x)x(x+1)-x}{1+2x}
		\end{equation}
		 from (\ref{sec3-eq2}).
		From (\ref{case2xb}), we obtain that
		\begin{equation}\label{sec3-eq3}
			x^2+x+\frac{b^{q+1}+\chi(x)(b^q+b)}{-\chi(x)b(b^{q+1}-4)}=0.
		\end{equation}
		We mention that $b^{q+1}\neq4$. Otherwise, $b^{q+1}=4$, then $b^{\frac{q^2-1}{2}}=4^{\frac{q-1}{2}}=2^{q-1}=1$, which contradicts $\chi(b)=-1$. Note that both $b^{q+1}+\chi(x)(b^q+b)$ and $b^{q+1}-4$ are in $\GF(q)$.  For a given $\chi(x)$, the equation (\ref{sec3-eq3}) is a quadratic equation of $x$, the discriminant is $\frac{(b+2\chi(x))^2b^{q-1}}{b^{q+1}-4}$, which is a square element in $\GF(q^2)$. Let $x_1$ and $x_2$ be solutions of (\ref{sec3-eq3}),
		then $\chi(x_1x_2)=\chi(\frac{b^{q+1}+\chi(x)(b^q+b)}{-\chi(x)b(b^{q+1}-4)})=-1$ since $\chi(b)=-1$. Only one of $x_1$ and $x_2$ is a square element, and the other one is a nonsquare element. For each value of $\chi(x)$, there is at most one solution. This case contributes at most two solutions.
		
		For $b\neq0,\pm1$, the solutions are all in the above two cases. According to the value of $\chi(b)$, we know  that Cases I and II cannot have solution simultaneously. Thus we obtain $\delta_2(b)\leq 2$ for $b\neq0,\pm1$.
		
		At last, we consider $b=\pm 1$. $\Delta_2(0)=1$ and $\Delta_2(-1)=-1$ are obvious. It can be checked that (\ref{sec3-eq1}) has no solution in the two cases. Then $\Delta^{-1}_2(1)=\{0\}, \Delta^{-1}_2(-1)=\{-1\}$, and $\delta_2(1)=\delta_2(-1)=1$. We complete the proof.	
\end{proof}

\section{The Boomerang Uniformity of $F_1$ and $F_2$}\label{sec3}

In this section, we determine the boomerang uniformity of $F_1(x)=x^{q-1}$ and $F_2(x)=x^{\frac{(q-1)(q+3)}{2}}$ over $\GF(q^2)$. We have the following theorems.

\begin{theorem}\label{sec2-thm1} Let $q\not\equiv2\pmod3$.
The boomerang uniformity of $F_1(x)=x^{q-1}$ over $\GF(q^2)$ is 2.
\end{theorem}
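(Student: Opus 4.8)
The plan is to compute the boomerang uniformity directly from its definition, reducing the boomerang system for the power map $F_1(x)=x^{q-1}$ to a problem about the differential equation $\Delta_1(x)=b$ whose solution structure is already fully described in Lemma \ref{sec2-lm1}. Since $F_1$ is a power function, it suffices to bound $\beta_{F_1}(1,b)$ for $b\in\GF(q^2)^*$; that is, to count solutions $(x,y)$ of the system
\begin{align*}
\left\{
\begin{array}{ll}
x^{q-1}-y^{q-1} & =b, \\
(x+1)^{q-1}-(y+1)^{q-1} & =b.
\end{array}
\right.
\end{align*}
First I would subtract the two equations to obtain $\Delta_1(x)=\Delta_1(y)$, i.e. $(x+1)^{q-1}-x^{q-1}=(y+1)^{q-1}-y^{q-1}$. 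Writing $c$ for this common value, every boomerang solution forces $x,y\in\Delta_1^{-1}(c)$ for some single $c\in\GF(q^2)$, and simultaneously $x^{q-1}-y^{q-1}=b$.

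The counting then splits according to the value of $c$, using the three regimes of Lemma \ref{sec2-lm1}. The degenerate values $c=0$ and $c=\pm1$ must be handled separately. When $c=0$, Lemma \ref{sec2-lm1} gives $x,y\in\GF(q)\setminus\{0,-1\}$, so $x^{q-1}=y^{q-1}=1$ and the first boomerang equation reads $b=0$, contradicting $b\in\GF(q^2)^*$; hence $c=0$ contributes nothing. When $c=\pm1$ the fibre $\Delta_1^{-1}(c)$ is a singleton, forcing $x=y$ and again $b=0$, which is excluded. This is the routine part: it shows that genuine boomerang solutions can only come from values $c$ with $\delta_1(c)\le 2$ and $c\neq 0,\pm1$.

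The crux is the remaining case $\delta_1(c)=2$, say $\Delta_1^{-1}(c)=\{x_1,x_2\}$ with $x_1\neq x_2$. Here $x$ and $y$ each range over $\{x_1,x_2\}$. The diagonal choices $x=y$ give $b=0$ and are discarded, so any valid boomerang solution uses $x\neq y$, i.e. the ordered pairs $(x_1,x_2)$ and $(x_2,x_1)$; these yield $b=x_1^{q-1}-x_2^{q-1}$ and $b=x_2^{q-1}-x_1^{q-1}=-b'$ respectively. The key point to establish is that for each admissible $b$ there is at most one such pair, so that $\beta_{F_1}(1,b)\le 2$, together with exhibiting a $b$ for which the bound $2$ is attained. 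To pin down the count I would use the explicit quadratic (\ref{sec2-2''}) governing $\Delta_1^{-1}(c)$: its two roots $x_1,x_2$ are determined by $c$ through the coefficients, and I would combine the relation $x^q = x - cx(x+1)$ from (\ref{sec2-2'}) with $x^{q-1}-y^{q-1}=b$ to show that the pair $(c,b)$ determines $\{x_1,x_2\}$ and the ordering uniquely, ruling out two distinct values of $c$ producing the same $b$ with the same orientation.

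The main obstacle I anticipate is exactly this last uniqueness argument: showing that the map from unordered fibres $\{x_1,x_2\}$ (indexed by $c$) to the boomerang value $b$ is injective up to the sign flip $b\mapsto -b$, so that no $b$ receives contributions from two different values of $c$. I expect to resolve it by expressing $x^{q-1}$ in terms of $c$ via $x^{q-1}=1+c x$ (rearranging (\ref{sec2-1}) for $x\neq 0,-1$) and substituting into $x_1^{q-1}-x_2^{q-1}=b$; this should turn the relation into $b=c(x_1-x_2)$, after which Vieta's formulas for (\ref{sec2-2''}) let me solve for $c$ as an explicit function of $b$, thereby forcing uniqueness. Finally, to confirm $\beta_{F_1}=2$ rather than $\beta_{F_1}<2$, I would pick any $c\neq0,\pm1$ with $\delta_1(c)=2$ and read off the corresponding $b$, verifying both $(x_1,x_2)$ and $(x_2,x_1)$ solve the system.
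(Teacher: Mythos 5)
Your setup coincides with the paper's: the reduction to $\Delta_1(x)=\Delta_1(y)=c$, the case analysis on $\delta_1(c)$ via Lemma \ref{sec2-lm1}, and the use of (\ref{sec2-2'}) and Vieta's formulas for (\ref{sec2-2''}) to get $b=\pm c(x_1-x_2)$ are all as in the paper (your rearrangement $x^{q-1}=1+cx$ should be $x^{q-1}=1-c(x+1)$ from (\ref{sec2-2'}), but this only flips a sign). The genuine gap is in your crux, and it is twofold. First, the injectivity you aim for is false: if $\{z_1,z_2\}$ is the fibre of $c$ and $(z_1,z_2)$ gives $b=c(z_1-z_2)$, then $\{-z_1-1,-z_2-1\}$ is the fibre of $-c$, and on that fibre $w^{q-1}=1+c(w+1)$, so the ordered pair $(-z_2-1,-z_1-1)$ gives $(1-cz_2)-(1-cz_1)=c(z_1-z_2)=b$ -- the \emph{same} $b$. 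Thus two distinct values of $c$, namely $\pm c$, always contribute to the same $b$, and this is precisely why the uniformity equals $2$. Your accounting is internally inconsistent on this point: $(x_1,x_2)$ and $(x_2,x_1)$ yield $b$ and $-b$ and so can never solve the system simultaneously for $b\neq 0$ (the paper notes this explicitly), so under your injectivity claim each $b$ would receive at most one ordered pair, giving $\beta_{F_1}(1,b)\le 1$ and making your planned attainment check (``both orderings solve the system'') impossible.

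Second, even the corrected statement -- that only $\pm c$ can serve a given $b$ -- does not follow from the computation you propose. From $b=\pm c(x_1-x_2)$ and Vieta you obtain only $b^2=c^2-\frac{4}{c^{q-1}}$, i.e. $c^{q+1}-b^2c^{q-1}-4=0$, a degree-$(q+1)$ condition on $c$ for fixed $b$, which by itself admits many roots. The paper closes this gap by extracting a second constraint from \emph{both} boomerang equations (not merely their difference): multiplying them by $xy$ and $(x+1)(y+1)$ respectively and subtracting gives $b(x+y+1)=x^q+y-x-y^q$, whence (\ref{1bxy}); combined with $x+y+1=\frac{2}{c}$ this yields $\left(\frac{b}{c}\right)^{q-1}=-1$, the first line of (\ref{1relationbc}). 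Only the two relations together force $c'=\alpha c$ with $\alpha^{q-1}=1$ and $\alpha^2=1$, hence $c'=-c$. Your plan uses the second boomerang equation only through $\Delta_1(x)=\Delta_1(y)$, and without the extra relation the uniqueness-up-to-sign cannot be established; you would need to add this ingredient and then redo the final count as: one ordered pair from $c$ plus one from $-c$, totaling exactly $2$.
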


\begin{proof}
To determine the boomerang uniformity of $F_1$, we mainly study the equation system
\begin{align}\label{sec2.2-1}
     \left\{
      \begin{array}{ll}
      x^{q-1}-y^{q-1}&=b\\
      (x+1)^{q-1}-(y+1)^{q-1}&=b
      \end{array}
    \right.
\end{align}
for arbitrary but fixed $b\in\GF(q^2)^*$.
It is obvious that the equation system (\ref{sec2.2-1}) is equivalent to the following one,
\begin{align}\label{sec2.2-1'}
     \left\{
      \begin{array}{ll}
      x^{q-1}-y^{q-1}&=b\\
      (x+1)^{q-1}-(y+1)^{q-1}&=b\\
      \Delta_1(x)&=\Delta_1(y),
      \end{array}
    \right.
\end{align}
where $\Delta_1(z)=(z+1)^{q-1}-z^{q-1}$ we defined before. Let $\Delta_1(x)=\Delta_1(y)=c$ for some $c\in\GF(q^2)$. By Lemma \ref{sec2-lm1}, we know that the equation $\Delta_1(x)=c$ (respectively, $\Delta_1(y)=c$) has $0,1,2$ or $q-2$ roots. More precisely, define
$$
\Omega_i=\{c\in\GF(q^2)~|~\delta_1(c)=i~\}
$$
for $i=0,1,2,q-2$. We mention that $\Omega_{q-2}=\{0\}$ and $\Omega_0\cup\Omega_1\cup\Omega_2=\GF(q^2)^*$. When $c$ runs through $\GF(q^2)$, we discuss in the following four cases.

{\bf Case 1.} $c\in\Omega_0$. By the definition of $\Omega_0$, $\Delta_1(x)=c$ (respectively, $\Delta_1(y)=c$) has no solution in $\GF(q^2)$. Hence this case cannot contribute solution to (\ref{sec2.2-1'}).

{\bf Case 2.} $c\in\Omega_1$. Let $\Delta_1^{-1}(c)=\{z_0\}$. In this case, $\Delta_1(x)=\Delta_1(y)=c$ has one solution $(x,y)=(z_0,z_0)$. This case cannot contribute solution to (\ref{sec2.2-1'}) since $b\neq 0$.

{\bf Case 3.} $c\in\Omega_{q-2}$, i.e., $c=0$. By Lemma \ref{sec2-lm1}, $\Delta_1^{-1}(0)=\GF(q)\setminus\{0,-1\}$. When $\Delta_1(x)=\Delta_1(y)=0$, i.e., $x,y\in\GF(q)\setminus\{0,-1\}$, we have $b=x^{q-1}-y^{q-1}=0$, which is a contradiction.
 Hence this case cannot contribute solution to (\ref{sec2.2-1'}).

{\bf Case 4.} $c\in\Omega_2$. We first consider the equation system (\ref{sec2.2-1'}). We know that $x,y\neq0,-1$. From the first two equations of  (\ref{sec2.2-1'}), we obtain
  \[x^qy-xy^q=bxy\]
  and
  \[(x+1)^q(y+1)-(x+1)(y+1)^q=b(x+1)(y+1).\]
Then
\[b(x+y+1)=b(x+1)(y+1)-bxy=x^q+y-x-y^q.\]
We have
\begin{equation}\label{1bxy}
b(x+y+1)+(b(x+y+1))^q=0.
\end{equation}

Let  $z_1$ and $z_2$ be the two distinct solutions of $\Delta_1(z)=c$. Then the possible solutions of (\ref{sec2.2-1'}) are
$(x,y)=(z_1,z_2)$ and $(x,y)=(z_2,z_1)$
since $b\neq 0$. Note that $z^{q-1}_1-z^{q-1}_2=-(z^{q-1}_2-z^{q-1}_1)$, $(z_1,z_2)$ and $(z_2,z_1)$ cannot be solutions of  (\ref{sec2.2-1'}) simultaneously. Without loss of generalization, we assume that $(z_1,z_2)$ is a solution of (\ref{sec2.2-1'}) for such $c$ and the fixed $b$.
 By the proof of Lemma \ref{sec2-lm1}, equation (\ref{sec2-2'}) shows that $z^{q-1}_i=1-c(z_i+1)$ for $i=1,2$. We have
\begin{equation}\label{1bc}
	b=z^{q-1}_1-z^{q-1}_2=c(z_1-z_2).
\end{equation}
The equation (\ref{sec2-2''}) shows that $z_1+z_2=-1+\frac{2}{c}$ and $z_1z_2=\frac{c^{q-1}+1-c^{q}}{c^{q+1}}$. Combining with (\ref{1bxy}) and (\ref{1bc}), we obtain the following relationship between $b$ and $c$.

\begin{align}\label{1relationbc}
	\left\{
	\begin{array}{ll}
		(\frac{b}{c})^{q-1}&=-1\\
		(\frac{b}{c})^2&=1-\frac{4}{c^{q+1}}.
	\end{array}
	\right.
\end{align}
For $c'\in\Omega_2$, $c'\neq c$, if $\Delta_1(x)=\Delta_1(y)=c'$ contributes another solution of (\ref{sec2.2-1'}), we know that $c'$ and $b$ also satisfy (\ref{1relationbc}). Let $c'=\alpha c$ for some $1\neq\alpha\in\GF(q^2)$. Then we obtain $\alpha^{q-1}=1$ from the first equation of (\ref{1relationbc}). Moreover, from the second equation of (\ref{1relationbc}), we have
\[c^2(1-\frac{4}{c^{q+1}})=(\alpha c)^2(1-\frac{4}{(\alpha c)^{q+1}}).\]
This with $\alpha^{q-1}=1$ leads to $\alpha^2=1$, then $\alpha=-1$, i.e., $c'=-c$. Only $-c\in\Omega_2$ can contribute solution of (\ref{sec2.2-1'}) for the given $b$.
More precisely, when $z_1$ and $z_2$ are distinct solutions of $\Delta_1(z)=c$, the solutions of $\Delta_1(z)=-c$ are $-z_1-1$ and $-z_2-1$. One can easily check that, when  $(z_1, z_2)$ is a solution of (\ref{sec2.2-1'}), so is $(-z_2-1,-z_1-1)$. We conclude that, for all $b\in\GF(q^2)$, the maximum of the number of solutions of  (\ref{sec2.2-1}) is 2. We complete the proof.
\end{proof}
\begin{theorem}\label{sec3-thm}
Let $q\not\equiv2\pmod3$ and $q\equiv3\pmod4$. The boomerang uniformity of $F_2(x)=x^{d_2}$ over $\GF(q^2)$ is 2, where $d_2=\frac{(q-1)(q+3)}2$.
\end{theorem}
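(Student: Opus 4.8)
The plan is to follow the blueprint of Theorem \ref{sec2-thm1}, feeding it the differential data of Lemma \ref{sec3-lm1} instead of Lemma \ref{sec2-lm1}. I would study the system
\begin{align*}
\left\{\begin{array}{ll} x^{d_2}-y^{d_2}&=b,\\ (x+1)^{d_2}-(y+1)^{d_2}&=b,\end{array}\right.
\end{align*}
for fixed $b\in\GF(q^2)^*$, append the equivalent condition $\Delta_2(x)=\Delta_2(y)=:c$, and set $\Omega_i=\{c\in\GF(q^2)\mid\delta_2(c)=i\}$ as in the proof of Theorem \ref{sec2-thm1}. By Lemma \ref{sec3-lm1}, $c$ lies in $\Omega_0,\Omega_1,\{0\}$, or $\Omega_2$. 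The first three are disposed of exactly as before: $\Omega_0$ has no preimage, $\Omega_1$ forces $x=y$ hence $b=0$, and $c=0$ forces $x,y\in\GF(q)\setminus\{0,-1\}$; the one new observation here is that $F_2(t)=t^{d_2}=(t^{q-1})^{(q+3)/2}=1$ for every $t\in\GF(q)^*$, so again $b=0$. Thus only $c\in\Omega_2$ can contribute.

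The heart of the matter, and the genuine obstacle, is the case $c\in\Omega_2$; this is where the quadratic character hidden in $d_2=(q-1)+\frac{q^2-1}{2}$ defeats a naive repetition of the $F_1$ argument, since in Lemma \ref{sec3-lm1} the analysis had to branch on $\chi(c)$ and in the harder branch the two preimages of $c$ carried opposite characters and solved \emph{different} quadratics. The device I would use to bypass this is to work not with $z^{q-1}$ but with the full images $\tilde A_i:=F_2(z_i)$ and $\tilde B_i:=F_2(z_i+1)$. Because $d_2(q+1)=\frac{(q+3)(q^2-1)}{2}\equiv0\pmod{q^2-1}$, every such image lies on the unit circle $\{u\mid u^{q+1}=1\}$, so $\tilde A_i^{q+1}=\tilde B_i^{q+1}=1$ while $\tilde B_i-\tilde A_i=\Delta_2(z_i)=c$. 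The sign $\chi(z_i)$ that caused the split is absorbed into $F_2(z_i)=\chi(z_i)z_i^{q-1}$, and both preimages now obey one common relation.

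Concretely, imposing $\tilde A_i^{q+1}=1$ together with $(\tilde A_i+c)^{q+1}=1$ and eliminating the conjugate via $\tilde A_i^{q}=\tilde A_i^{-1}$ shows that $\tilde A_1,\tilde A_2$ are the two roots of the single quadratic $t^2+ct+c^{1-q}=0$, whence $\tilde A_1+\tilde A_2=-c$. Combined with the first boomerang equation $\tilde A_1-\tilde A_2=b$, this gives $\tilde A_1=\frac{b-c}{2}$, $\tilde A_2=-\frac{b+c}{2}$, and the membership $\tilde A_i^{q+1}=1$ with the routine fact $2^{q+1}=4$ yields
\begin{align*}
(b-c)^{q+1}=4,\qquad (b+c)^{q+1}=4,
\end{align*}
equivalently $(b/c)^{q-1}=-1$ and $b^{q+1}+c^{q+1}=4$ — precisely the analogue of (\ref{1relationbc}). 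Writing $c=bs$ with $s^{q-1}=-1$, the second relation pins down $s^2=1-4/b^{q+1}$, so $s=\pm s_0$ and $c\in\{c_0,-c_0\}$: at most two values of $c$ are admissible for a given $b$. Since each admissible $c\in\Omega_2$ contributes exactly one ordered pair (the reversed pair gives $-b\neq b$), there are at most two solutions and $\beta_{F_2}\le2$.

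For tightness I would argue as in Theorem \ref{sec2-thm1}: the involution $z\mapsto -z-1$ satisfies $\Delta_2(-z-1)=-\Delta_2(z)$ and, because $d_2$ is even, sends a solution coming from $c$ to the solution $(-z_1-1,-z_2-1)$ coming from $-c$ for the \emph{same} $b$; as $c\neq -c$, the two fibers are disjoint so the solutions are distinct. Since $\delta_2(c)=\delta_2(-c)$ and $\Omega_2\neq\emptyset$, picking $c\in\Omega_2$ with $c^{q+1}\neq4$ produces $b=\tilde A_1-\tilde A_2\neq0$ with two solutions, so $\beta_{F_2}=2$. I expect the only real difficulty to be the $\Omega_2$ computation; once the unit-circle viewpoint collapses the two character branches into the one quadratic above, the remaining steps ($2^{q+1}=4$, the count of admissible $c$, and the pairing) are routine and mirror the $F_1$ proof.
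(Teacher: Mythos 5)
Your proposal is correct in substance and, in the only hard case $c\in\Omega_2$, takes a genuinely different route from the paper. The paper's proof mirrors the two-case structure of Lemma \ref{sec3-lm1}: it branches on the quadratic character ($\chi(z_1)=\chi(z_2)$ versus $\chi(z_1)=-\chi(z_2)$), writes $z_1,z_2$ explicitly in terms of $\sqrt{c^{q+1}-4}$, plugs the four candidate pairs into a formula for $b$, obtains $b^2=c^2-\frac{4}{c^{q-1}}$ in each branch, and then needs a separate computation (using $2z_1^{q+1}+z_1^q+z_1\in\GF(q)$, $z_1^q-z_1\notin\GF(q)$) to establish $(\frac{b}{c})^{q-1}=-1$ before it can quote the relations (\ref{1relationbc}) and finish as in Theorem \ref{sec2-thm1}. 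You instead exploit that $F_2$ maps $\GF(q^2)^*$ into the unit circle $\{u\mid u^{q+1}=1\}$ (since $d_2(q+1)\equiv 0\pmod{q^2-1}$), so both images $\tilde{A}_i=F_2(z_i)$ satisfy the single quadratic $t^2+ct+c^{1-q}=0$; the character split is absorbed into $F_2(z_i)=\chi(z_i)z_i^{q-1}$ and never resurfaces. I verified your algebra: eliminating $\tilde{A}_i^q=\tilde{A}_i^{-1}$ does give that quadratic, $b\neq0$ forces $\tilde{A}_1\neq\tilde{A}_2$ so that $\tilde{A}_1+\tilde{A}_2=-c$, and $(b\pm c)^{q+1}=4$ is (by adding and subtracting the expansions) exactly equivalent to (\ref{1relationbc}); the count $c\in\{c_0,-c_0\}$ and the at-most-one-ordering observation then give $\beta_{F_2}\le 2$ as in the paper. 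Your route is shorter, avoids the explicit root formulas and the four-pair bookkeeping of the paper's Case 2, and makes transparent why $F_1$ and $F_2$ obey identical $b$--$c$ relations ($x^{q-1}$ lands on the same unit circle), so it would in fact unify both theorems.

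Two caveats, both in the tightness step. First, $c^{q+1}\neq4$ only guarantees the quadratic $t^2+ct+c^{1-q}$ has two distinct roots, not that $\tilde{A}_1\neq\tilde{A}_2$: a priori both images could sit on the same root, which is exactly what happens in the paper's discarded ``$b=0$'' pairings. This is repairable: $F_2(z_1)=F_2(z_2)$ forces $z_1/z_2\in\GF(q)^*$ (the kernel of $x\mapsto x^{d_2}$ is $\GF(q)^*$), and together with $F_2(z_1+1)=F_2(z_2+1)$ this puts $z_1\in\GF(q)$, contradicting $c\notin\{0,\pm1\}$; so in fact $b\neq0$ for \emph{every} $c\in\Omega_2$. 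Second, you assert $\Omega_2\neq\emptyset$ without proof. The paper is equally silent here (its lower bound is the unargued pairing remark), and the point is not vacuous: at $q=3$, which satisfies both hypotheses, $F_2(x)=x^6=(x^2)^3$ is perfect nonlinear on $\GF(9)$, every $\delta_2(c)=1$, $\Omega_2=\emptyset$, and $\beta_{F_2}(1,b)=0$ for all $b\neq0$, so attainability of the value $2$ genuinely requires $q>3$ plus an argument producing some $c$ with $\delta_2(c)=2$ (simple counting from $\sum_c\delta_2(c)=q^2$ does not suffice). Since this defect is inherited from, and indeed present in, the paper's own proof, it does not distinguish your argument from theirs; your upper-bound argument, the heart of the theorem, is sound and cleaner.
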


\begin{proof}
   We consider the equivalent equation system
  \begin{align}\label{sec3-eq4}
  \left\{
    \begin{array}{ll}
    x^{d_2}-y^{d_2}&=b\\
    (x+1)^{d_2}-(y+1)^{d_2}&=b\\
    \Delta_2(x)&=\Delta_2(y)\\
    \end{array}
  \right.
  \end{align}
  for arbitrary but fixed $b\in\GF(q^2)^*$, where $\Delta_2(z)=(z+1)^{d_2}-z^{d_2} $ we defined before. Similarly, we define
$$
\Omega_i=\{c\in\GF(q^2)~|~\delta_2(c)=i\}
$$
for $i=0,1,2,q-2$. We mention that $\Omega_{q-2}=\{0\}$ and $\Omega_0\cup\Omega_1\cup\Omega_2=\GF(q^2)^*$. Let $\Delta_2(x)=\Delta_2(y)=c$, we discuss the equation system (\ref{sec3-eq4}) when $c$ runs through $\GF(q^2)$.
Similar to the proof of Theorem \ref{sec2-thm1}, we can prove that when $c\in\Omega_0\cup\Omega_1\cup\Omega_{q-2}$, the corresponding pair $(x,y)$ which comes from $\Delta_2(x)=\Delta_2(y)=c$ is not a solution of (\ref{sec3-eq4}). We only consider the case that $c\in\Omega_2$. Let $z_1$ and $z_2$ be the solutions of $\Delta_2(z)=c$, similar to the proof of Theorem \ref{sec3-lm1}, possible solutions are $(z_1,z_2)$ and $ (z_2,z_1)$. We always have,
\begin{equation}\label{bz1z2}
b^2=(z_1^{d_2}-z_2^{d_2})^2=(\chi(z_1)z^{q-1}_1-\chi(z_2)z^{q-1}_2)^2.
\end{equation}
 According to Lemma \ref{sec3-lm1}, we discuss the relationship of $b$ and $c$ in the following two cases.

{\bf Case 1.} $z_1$ and $z_2$ belong to Case I of Lemma (\ref{sec3-lm1}), i.e., $\chi(z_1+1)=\chi(z_1)=\chi(z_2+1)=\chi(z_2)$. In this case, $z_1$ and $z_2$ are the two solutions of
\[(z+1)^{q-1}-z^{q-1}=c\chi(z). \]
For a given value of $\chi(z_1)$, by the proof of Lemma \ref{sec2-lm1}, we obtain $z^{q-1}_i=1-c\chi(z_1)(z_i+1)$  ($i=1,2$) from (\ref{sec2-2'}), and
$z_1+z_2=-1+\frac{2\chi(z_1)}{c}$, $z_1z_2=\frac{c^{q-1}+1-c^{q}\chi(z_1)}{c^{q+1}}$ from (\ref{sec2-2''}).
From (\ref{bz1z2}) we have
\begin{equation*}\label{sec3-eq9'}
  b^2=(\chi(z_1)z^{q-1}_1-\chi(z_2)z^{q-1}_2)^2=c^2(z_1-z_2)^2=c^2-\frac4{c^{q-1}}.
\end{equation*}


{\bf Case 2.}  $z_1$ and $z_2$ belong to Case II of Lemma (\ref{sec3-lm1}), i.e., $\chi(z_1+1)=\chi(z_2)=1,\chi(z_1)=\chi(z_2+1)=-1$ (do not consider the order of $z_1$ and $z_2$). Based on (\ref{sec3-eq3}), $z_1$ and $z_2$ satisfy of $z_1^2+z_1+\frac{c^{q+1}-c^q-c}{c(c^{q+1}-4)}=0$ and $z_2^2+z_2+\frac{c^{q+1}+c^q+c}{-c(c^{q+1}-4)}=0$, respectively.
We have

\[z_1 =\frac12(-1+\frac{(c-2)c^{\frac{q-1}2}}{\sqrt{c^{q+1}-4}}) ~~~\mathrm{or}~~~ \frac12(-1-\frac{(c-2)c^{\frac{q-1}2}}{\sqrt{c^{q+1}-4}}),\]
\[z_2 =\frac12(-1+\frac{(c+2)c^{\frac{q-1}2}}{\sqrt{c^{q+1}-4}}) ~~~\mathrm{or}~~~ \frac12(-1-\frac{(c+2)c^{\frac{q-1}2}}{\sqrt{c^{q+1}-4}}).\]
There are four possible pairs of $(z_1,z_2)$. Moreover, by (\ref{case2xb}) we have
$$
z_1^{q-1} = \frac{c(z_1+1)-1}{1+2z_1}
$$
and
$$
  z_2^{q-1} = \frac{-c(z_2+1)-1}{1+2z_2},
$$
then
\begin{equation}\label{sec3-eq11'}
b=\chi(z_1)z^{q-1}_1-\chi(z_2)z^{q-1}_2=\frac{c(z_2-z_1)+2(1+z_1+z_2)}{(1+2z_1)(1+2z_2)}.
\end{equation}
Plugging the possible pairs $(z_1,z_2)$ into (\ref{sec3-eq11'}), we obtain
\begin{equation*}\label{sec3-eq11''}
	b^2=c^2-\frac4{c^{q-1}},
\end{equation*}
when $(z_1,z_2)=(\frac12(-1+\frac{(c-2)c^{\frac{q-1}2}}{\sqrt{c^{q+1}-4}}), \frac12(-1-\frac{(c+2)c^{\frac{q-1}2}}{\sqrt{c^{q+1}-4}}))$ and $(z_1,z_2)=(\frac12(-1-\frac{(c-2)c^{\frac{q-1}2}}{\sqrt{c^{q+1}-4}}), \frac12(-1+\frac{(c+2)c^{\frac{q-1}2}}{\sqrt{c^{q+1}-4}}))$. The other two pairs give $b=0$ and we discard them.

In the above two cases, we always have $b^2=c^2-\frac4{c^{q-1}}$. Then $(\frac{b}{c})^2=1-\frac{4}{c^{q+1}}\in\GF(q)$. We obtain $(\frac{b}{c})^{q-1}=1$ or $-1$. In Case 1, we claim that $(\frac{b}{c})^{q-1}=-1$. The proof is the same as the determination of the first equation in (\ref{1relationbc}). In what follows, we prove that $(\frac{b}{c})^{q-1}=-1$ in Case 2. In Case 2, we have
\[c=(z_1+1)^{d_2}-z^{d_2}_1=(z_1+1)^{q-1}+z^{q-1}_1=\frac{2z^{q+1}_1+z^q_1+z_1}{z_1(z_1+1)},\]
and
\[c^{q+1}=(\frac{2z^{q+1}_1+z^q_1+z_1}{z_1(z_1+1)})^{q+1}=\frac{(2z^{q+1}_1+z^q_1+z_1)^{q+1}}{z^{q+1}_1(z_1+1)^{q+1}}.\]
Then
\[(\frac{b}{c})^2=1-\frac{4}{c^{q+1}}=1-\frac{4z^{q+1}_1(z_1+1)^{q+1}}{(2z^{q+1}_1+z^q_1+z_1)^{q+1}}=(\frac{z^q_1-z_1}{2z^{q+1}_1+z^q_1+z_1})^2.\]
Note that $2z^{q+1}_1+z^q_1+z_1\in\GF(q)$ and $z^q_1-z_1\notin\GF(q)$, we assert that $(\frac{b}{c})^{q-1}=-1$.
By the discussion above, we have
\begin{align*}
	\left\{
	\begin{array}{ll}
		(\frac{b}{c})^{q-1}&=-1\\
		(\frac{b}{c})^2&=1-\frac{4}{c^{q+1}},
	\end{array}
	\right.
\end{align*}
which is the same as the relationship shown in the proof of Theorem \ref{sec2-thm1}. Then we can similarly prove that, the maximum number of the solutions of (\ref{sec3-eq4}) for all $b\in\GF(q^2)$ is 2. The desired result follows.
\end{proof}

\section{Concluding and Remarks}\label{sec4}

In this paper, two classes of power mappings with boomerang uniformity 2 were presented. We used their differential properties to study their boomerang uniformity. Our approach can be used to study the boomerang uniformity of other power mappings. It would be interesting to find more infinite classes of monomials with low boomerang uniformity.


\end{document}